\newcommand{\arxiv}[1]{\href{http://arxiv.org/abs/#1}{\texttt{arXiv:#1}}}
\definecolor{brown(web)}{rgb}{0.65, 0.16, 0.16}
\newcommand{\cola}[1]{\textcolor{black}{#1}}
\newcommand{\colb}[1]{\textcolor{green}{#1}}
\newcommand{\colc}[1]{\textcolor{red}{#1}}
\newcommand{\cold}[1]{\textcolor{blue}{#1}}
\newcommand{\cole}[1]{\textcolor{Cyan}{#1}}
\newcommand{\colf}[1]{\textcolor{Plum}{#1}}
\newcommand{\colg}[1]{\textcolor{teal}{#1}}
\renewcommand\mod{\,\operatorname{mod}}
\renewcommand\ell{l}
\begin{document}

% Enter full title and short title for running headers
\title{Closed-Form Expressions for the n-Queens Problem and Related Problems}
%\shorttitle{A Demonstration of the \textit{OUP Journal} Class File}

% Enter the publication year and the ID number of the paper
\volumeyear{2016}
\paperID{rnn999}

% Author name(s)
\author{Kevin Pratt}
% Abbreviated author name for running headers
\abbrevauthor{K. Pratt}
% Abbreviated author name for first page header
\headabbrevauthor{Pratt, K.}

\address{%
University of Connecticut, 324 Whitney Hall, Storrs, CT 06269\\
\href{mailto:kevin.pratt@uconn.edu}{kevin.pratt@uconn.edu}}

%\correspdetails{kevin.pratt@uconn.edu}

% Received/revised/accepted dates will be entered by the publisher during production of an accepted paper. Please do not edit these placeholders for submission.
\received{1 Month 20XX}
\revised{11 Month 20XX}
\accepted{21 Month 20XX}

% Enter details of editor communicating this article
\communicated{A. Editor}

\begin{abstract}
In this paper, we derive simple closed-form expressions for the $n$-queens problem
and three related problems in terms of permanents of $(0,1)$ matrices. These formulas are the first of their kind.
Moreover, they provide the first method for solving these problems with polynomial space that has a
nontrivial time complexity bound. We then show how a closed form for the number of Latin squares of order $n$ 
follows from our method. Finally, we prove lower bounds. In particular, we show that the permanent of Schur's complex-valued matrix 
is a lower bound for the toroidal semi-queens problem, or equivalently, the number of transversals in a cyclic Latin square.
\end{abstract}

\maketitle

\section{Introduction}

The $n$-queens problem is to determine $Q(n)$, the number of arrangements of $n$ queens on an $n$-by-$n$ chessboard such that no two queens attack. It is a generalization of the eight queens puzzle posed in 1848 by Max Bezzel, a German chess player. The $n$-queens problem has been widely studied since then, attracting the attention of P\'olya and Lucas. It is now best known as a toy problem in algorithm design \cite{survey}.

Despite this rich history, little is known of the general behavior of $Q(n)$. Key results are that $Q(n) > 1$ for $n > 3$, and $Q(n) > 4^{n/5}$ when $\gcd(n,30) = 5$. See \cite{survey} for a comprehensive survey. The only closed-form expression\footnote{We would like to correct a misunderstanding in \cite{survey}. The authors state that there exists no closed-form expression for $Q(n)$ because it was shown to be beyond the $\#P$ complexity class. However, the result referenced only shows that the $n$-queens problem is beyond $\#P$ because $Q(n)$ can be more than polynomial in $n$ \cite{hard}. A function can clearly be beyond $\#P$ for this reason and still have a closed-form expression; consider $2^n$ for instance.} we are aware of was given in \cite{closed1}. It is ``very complicated" in the authors' own words, however.

The variants of the $n$-queens problem we consider are the \textit{toroidal} $n$-queens problem $T(n)$, the \textit{semi-queens} problem $S(n)$, and the \textit{toroidal semi-queens} problem $TS(n)$. As with $Q(n)$, the general behavior of these functions is not well understood; asymptotic lower bounds are only known for $TS(n)$ \cite{asymp}. 

In this paper, we derive closed-form expressions for $Q(n), T(n), S(n)$, and $TS(n)$ in terms of permanents of $(0,1)$ matrices. The method we use is general and proceeds as follows. First, we come up with an \textit{obstruction matrix} for a problem. Each entry in this matrix is a multilinear monomial. We then prove a formula for the sum of the coefficients of the terms containing some number of distinct variables in a polynomial. This is then used to obtain closed-form expressions for our problems. The expressions we obtain are very similar to those for the number of Latin squares of order $n$, such as those given in \cite{latin}. In fact, we show that one such formula is an immediate corollary of our method.

The permanent was previously considered by Rivin and Zabih to compute $Q(n)$ and $T(n)$ \cite{nqp}. Similarly, in 1874 Gunther used the determinant to construct solutions to the $n$-queens problem for small values of $n$ \cite{survey}. As far as we can tell however, no one has previously attempted to obtain closed-form expressions with this approach. The expressions we obtain in doing so can be evaluated in nontrivial time (i.e., better than the $O(n!)$ brute-force approach) and with polynomial space. The only other algorithms for computing $Q(n)$ and $T(n)$ with nontrivial time complexity bounds were given in \cite{alg}; however, this approach requires exponential space. We are not aware of any previously known algorithms for computing $S(n)$ and $TS(n)$ with nontrivial complexity bounds.

Finally, we prove lower bounds for these problems in terms of determinants of $(0,1)$ matrices. As a consequence, we show that the permanent of Schur's complex-valued matrix \cite{schur} provides a lower bound for the toroidal semi-queens problem.

\section{Preliminary Definitions}
The permanent of an $n$-by-$n$ matrix $\mathbf{A}= (a_{i,j})$ is given by
\begin{equation*}
\mathrm{per}(\mathbf{A}) = \sum_{\sigma \in S_n}\prod_{i=1}^{n}a_{i,\sigma(i)}
\end{equation*}
where $S_n$ is the symmetric group on $n$ elements. It is a well-known result in complexity theory that computing the permanent of a matrix is intractable, even when restricted to the set of $(0,1)$ matrices \cite{sharpp}.

An \textit{obstruction matrix} $\mathbf{A}$ is a square matrix whose entries are multilinear monomials. If $\mathbf{A}$ contains the variables $x_1, x_2, \ldots, x_m$ and $s = (s_i) \in \{0,1\}^m$, then $\mathbf{A}|s$ is the matrix obtained by substituting $x_i = s_i$ for all $i$.

An $n$-by-$n$ matrix $\mathbf{M} = (m_{i,j})$ is \textit{diagonally constant} if each northwest-southeast diagonal is constant; that is, $m_{i,j} = m_{i+1,j+1}$. A \textit{circulant matrix} is a diagonally constant matrix with the property that each row is obtained by rotating the preceding row one position to the right, i.e., $m_{i,j} = m_{i+1, j+1 \mod n}$.

$Q(n)$ is the number of arrangements of $n$ queens on an $n$-by-$n$ chessboard such that no two attack; that is, lie on the same row, column, or diagonal \cite[Sequence A000170] {oeis}.

$S(n)$ is the number of arrangements of $n$ nonattacking \textit{semi-queens} on an $n$-by-$n$ chessboard \cite[Sequence A099152]{oeis}. A semi-queen has the same moves as a queen except for the northeast-southwest diagonal moves. Note that $S(n) \ge Q(n)$.

$T(n)$ is the number of arrangements of $n$ nonattacking queens on a toroidal $n$-by-$n$ chessboard \cite[Sequence A051906]{oeis}. The toroidal board is obtained by identifying the edges of the board as if it were a torus. As a result, the diagonals a queen can move along wrap around the board. Note that $Q(n) \ge T(n)$.

$TS(n)$ is the number of arrangements of $n$ nonattacking semi-queens on an $n$-by-$n$ toroidal chessboard. $TS(n)$ is also the number of transversals in a cyclic Latin square \cite[Sequence A006717]{oeis}. Note that $S(n) \ge TS(n)$.

\begin{figure}
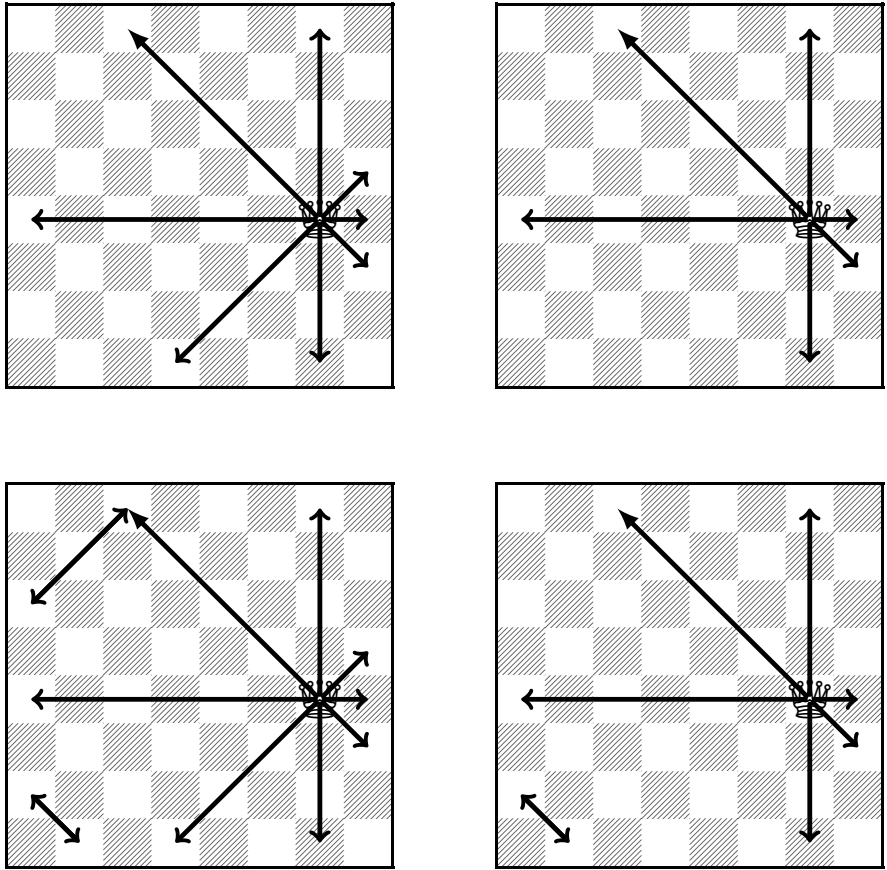

\qquad
\qquad
\qquad
\setchessboard{showmover=false, boardfontsize=18pt, label=false}
\newgame
\chessboard[clearboard,
addpieces={Qg4},
pgfstyle=straightmove,
markmove=g4-c8,
arrow=to,linewidth=0.2ex,
markmove = g4-h3,
markmove = g4-h5,
markmove = g4-d1,
markmove = g4-g8,
markmove = g4-g1,
markmove = g4-h4,
markmove = g4-a4]
\setchessboard{showmover=false, boardfontsize=18pt}
\newgame
\chessboard[clearboard,
addpieces={Qg4},
pgfstyle=straightmove,
markmove=g4-c8,
arrow=to,linewidth=0.2ex,
markmove = g4-h3,
markmove = g4-g8,
markmove = g4-g1,
markmove = g4-h4,
markmove = g4-a4]

\qquad 
\qquad
\qquad
\setchessboard{showmover=false, boardfontsize=18pt}
\newgame
\chessboard[clearboard,
addpieces={Qg4},
pgfstyle=straightmove,
markmove=g4-c8,
arrow=to,linewidth=0.2ex,
markmove = g4-h3,
markmove = b1-a2,
markmove = a2-b1,
markmove = g4-h5,
markmove = g4-d1,
markmove = a6-c8,
markmove = c8-a6,
markmove = g4-g8,
markmove = g4-g1,
markmove = g4-h4,
markmove = g4-a4]
\setchessboard{showmover=false, boardfontsize=18pt}
\newgame
\chessboard[clearboard,
addpieces={Qg4},
pgfstyle=straightmove,
markmove=g4-c8,
arrow=to,linewidth=0.2ex,
markmove = g4-h3,
markmove = b1-a2,
markmove = a2-b1,
markmove = g4-g8,
markmove = g4-g1,
markmove = g4-h4,
markmove = g4-a4]
\caption{From top left to bottom right: The squares attacked by a queen, a semi-queen, a toroidal queen, and a toroidal semi-queen.}
\end{figure}

%%%%%%%%%%%%%%%%%%%%%%%%%%%%%%%%%%%%%%%%%%%%%%%%%%%%%%%
\section{Derivation of the Main Results}
We begin by introducing the $n$-by-$n$ obstruction matrices $\mathbf{Q}_n$, $\mathbf{T}_n$, $\mathbf{S}_n$, and $\mathbf{Z}_n$, which will be used to compute $Q(n)$, $T(n)$, $S(n)$, and $TS(n)$, respectively.

$\mathbf{Q}_n$ contains the variables $x_1, y_1, \ldots, x_{2n-1}, y_{2n-1}$. The variable $x_i$ corresponds to the $i$th northwest-southeast diagonal (indexed from bottom left to top right), and $y_i$ corresponds to the $i$th northeast-southwest diagonal (indexed from bottom right to top left). The $(i,j)^{\text{th}}$ entry of $\mathbf{Q}_n$ is $x_{n-i+j}y_{2n-i-j+1}$. 

$\mathbf{T}_n$ contains the variables $x_1, y_1, \ldots, x_{2n}, y_{2n}$. The variable $x_i$ corresponds to the $i$th northwest-southeast broken diagonal, and $y_i$ corresponds to the $i$th northeast-southwest broken diagonal. The $(i,j)^{\text{th}}$ entry of $\mathbf{T}_n$ is $x_{(n-i+j) \mod n} y_{(2n-i-j+1) \mod n}$.

$\mathbf{S}_n$ contains the variables $x_1, x_2, \ldots, x_{2n-1}$, and $x_i$ corresponds to the $i$th northwest-southeast diagonal. The $(i,j)^{\text{th}}$ entry of $\mathbf{S}_n$ is $x_{n-i+j}$.

$\mathbf{Z}_n$ contains the variables $x_1, x_2, \ldots, x_n$, and $x_i$ corresponds to the $i$th northwest-southeast broken diagonal. The $(i,j)^{\text{th}}$ entry of $\mathbf{Z}_n$ is $x_{(n-i+j)\mod n}$.

\begin{example}
\centering
Obstruction matrices for $Q(n),T(n),S(n)$, and $TS(n)$.
\begin{align*}
&\mathbf{Q}_4 = \begin{bmatrix} 
\cold{x_4}\colg{y_7} & \cole{x_5}\colf{y_6} & \colf{x_6}\cole{y_5} & \colg{x_7}\cold{y_4}\\
\colc{x_3}\colf{y_6} & \cold{x_4}\cole{y_5} & \cole{x_5}\cold{y_4} & \colf{x_6}\colc{y_3}\\
\colb{x_2}\cole{y_5} & \colc{x_3}\cold{y_4} & \cold{x_4}\colc{y_3} & \cole{x_5}\colb{y_2}\\
\cola{x_1}\cold{y_4} & \colb{x_2}\colc{y_3} & \colc{x_3}\colb{y_2} & \cold{x_4}\cola{y_1}
\end{bmatrix}
&&\mathbf{T}_4 = \begin{bmatrix} 
\cold{x_4}\cola{y_1} & \cola{x_1}\colb{y_2} & \colb{x_2}\colc{y_3} & \colc{x_3}\cold{y_4}\\
\colc{x_3}\colb{y_2} & \cold{x_4}\colc{y_3} & \cola{x_1}\cold{y_4} & \colb{x_2}\cola{y_1}\\
\colb{x_2}\colc{y_3} & \colc{x_3}\cold{y_4} & \cold{x_4}\cola{y_1} & \cola{x_1}\colb{y_2}\\
\cola{x_1}\cold{y_4} & \colb{x_2}\cola{y_1} & \colc{x_3}\colb{y_2} & \cold{x_4}\colc{y_3}
\end{bmatrix}\\
&\mathbf{S}_4 = \begin{bmatrix} 
\cold{x_4} & \cole{x_5} & \colf{x_6} & \colg{x_7}\\
\colc{x_3} & \cold{x_4} & \cole{x_5} &\colf{x_6}\\
\colb{x_2} &\colc{x_3} & \cold{x_4} & \cole{x_5}\\
\cola{x_1} & \colb{x_2} & \colc{x_3} & \cold{x_4}
\end{bmatrix}
&&\mathbf{Z}_4 = \begin{bmatrix}
\cold{x_4} & \cola{x_1} & \colb{x_2} & \colc{x_3}\\
\colc{x_3} & \cold{x_4} & \cola{x_1} & \colb{x_2}\\
\colb{x_2} &\colc{x_3} & \cold{x_4} & \cola{x_1}\\
\cola{x_1} & \colb{x_2} & \colc{x_3} & \cold{x_4}
\end{bmatrix}
\end{align*}
\end{example}

\begin{definition}
Let $P$ be a polynomial, and let $k \in \mathbb{N}$. Then $g(P, k)$ is defined to be the sum of the coefficients of the terms in $P$ that are a product of exactly $k$ distinct variables. 
\end{definition}

Note that when $k = \deg{P}$, the terms whose coefficients are summed by $g(P,k)$ are multilinear. This leads to the following fact:

\begin{lemma}
\label{grel}
$g(\mathrm{per}(\mathbf{Q}_n), 2n) = Q(n)$, $g(\mathrm{per}(\mathbf{T}_n), 2n) = T(n)$, $g(\mathrm{per}(\mathbf{S}_n), n) = S(n)$, and $g(\mathrm{per}(\mathbf{Z}_n), n) = TS(n)$.
\end{lemma}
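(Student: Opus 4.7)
The plan is to expand the permanent as a sum over permutations and match each nonzero monomial to a queen placement. Concretely, by definition,
\[
\mathrm{per}(\mathbf{Q}_n) \;=\; \sum_{\sigma \in S_n} \prod_{i=1}^{n} x_{n-i+\sigma(i)}\,y_{2n-i-\sigma(i)+1}.
\]
Each $\sigma \in S_n$ corresponds bijectively to placing $n$ rooks on the $n \times n$ board, one per row and one per column, by putting a piece at $(i,\sigma(i))$. So the sum ranges over all ``nonattacking rook'' placements, which is exactly the set of queen placements that avoid the row/column constraint; the diagonals are still to be checked.

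Next I would observe that the subscript $n - i + j$ is constant along each NW--SE diagonal and $2n - i - j + 1$ is constant along each NE--SW diagonal, so $x_{n-i+j}$ (resp.\ $y_{2n-i-j+1}$) is precisely the variable the definition attaches to the NW--SE (resp.\ NE--SW) diagonal through $(i,j)$. Hence for a given $\sigma$, the monomial $\prod_i x_{n-i+\sigma(i)} y_{2n-i-\sigma(i)+1}$ has degree $2n$, and it is a product of $2n$ \emph{distinct} variables if and only if the $n$ placed pieces lie on $n$ distinct NW--SE diagonals and on $n$ distinct NE--SW diagonals, i.e., no two pieces attack along any diagonal. Since every such monomial appears with coefficient $1$, $g(\mathrm{per}(\mathbf{Q}_n),2n)$ adds $1$ for each $\sigma$ corresponding to a valid $n$-queens placement and $0$ otherwise, giving $Q(n)$.

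The arguments for $T(n)$, $S(n)$, and $TS(n)$ are identical in structure: the $x$-variables in $\mathbf{T}_n$ and $\mathbf{Z}_n$ are indexed modulo $n$ so that they are constant along the \emph{broken} (toroidal) NW--SE diagonals, and likewise for $y$ in $\mathbf{T}_n$; in $\mathbf{S}_n$ and $\mathbf{Z}_n$ the $y$-variables are dropped because semi-queens do not attack along the NE--SW diagonals. The same monomial-versus-distinct-variables argument then shows that selecting the top-degree multilinear coefficients in $\mathrm{per}(\mathbf{T}_n)$, $\mathrm{per}(\mathbf{S}_n)$, $\mathrm{per}(\mathbf{Z}_n)$ counts precisely the valid toroidal-queen, semi-queen, and toroidal-semi-queen placements, respectively.

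The only thing that requires genuine care is verifying that the index formulas in the definitions of $\mathbf{Q}_n, \mathbf{T}_n, \mathbf{S}_n, \mathbf{Z}_n$ really are constant along the diagonals they are supposed to represent (and that distinct diagonals receive distinct variables). This is a short arithmetic check: for $\mathbf{Q}_n$, if $(i,j)$ and $(i',j')$ lie on the same NW--SE diagonal then $j - i = j' - i'$, so $n-i+j = n-i'+j'$; similarly $i+j = i'+j'$ on a NE--SW diagonal gives equal $y$-indices; the map from diagonals to variable indices is then an obvious bijection onto $\{1,\ldots,2n-1\}$. The toroidal variants require the same check modulo $n$. Once that is confirmed, the correspondence between multilinear permutations and nonattacking placements is immediate, and the lemma follows.
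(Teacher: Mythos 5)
Your proposal is correct and follows the same route as the paper's own proof: expand the permanent over permutations, identify each permutation with a one-per-row/column placement, and observe that the term is square-free exactly when no two pieces share a diagonal. The index-arithmetic verification you add is a welcome elaboration but not a different argument.
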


\begin{proof}
This follows immediately from the definition of the permanent and the structure of $\mathbf{Q}_n, \mathbf{T}_n, \mathbf{S}_n,$ and $\mathbf{Z}_n$. Consider $\mathrm{per}(\mathbf{Q}_n)$ for instance. We can write this as a sum of $n!$ terms of degree $2n$. Each term in this polynomial corresponds to a permutation matrix. If a term is square-free, then from the definition of $\mathbf{Q}_n$ no two elements in the corresponding permutation matrix lie along the same diagonal. Since a permutation matrix has no two nonzero entries on the same row or column, it follows that this permutation matrix corresponds to a solution for the $n$-queens problem.
\end{proof}

Suppose that $P$ is a polynomial in $m$ variables.
Let $S_{m,k}$ be the subset of $\{0,1\}^m$ that consists of the tuples containing $k$ ones; that is,
$$S_{m,k} = \{(s_1, \ldots , s_{m}) \in \{0,1\}^{m} : \sum_{i=1}^{m} s_i = k\}.$$
Define $$f(P, k) = {\sum_{(s_1, \ldots , s_m) \in S_{m,k}}} P(s_1, \ldots, s_m).$$
The following fact is now used to derive an expression for $g$ in terms of $f$.
\begin{fact}
Let $m \ge k \ge l \ge 0$. Assume
\begin{align*}
a(k) &= \sum_{i=l}^k b(i) \binom{m-i}{k-i}. \\ \intertext{Then}
b(k) &= \sum_{i=l}^k a(i) \binom{m-i}{k-i} (-1)^{k-i}.
\end{align*}
\end{fact}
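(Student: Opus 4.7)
The plan is to prove this as a direct binomial inversion: substitute the given expression for $a(i)$ into the right-hand side of the proposed formula for $b(k)$ and show that the double sum collapses to $b(k)$ via a standard combinatorial identity plus the binomial theorem.

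Concretely, I would begin by writing
\begin{equation*}
\sum_{i=l}^{k} a(i)\binom{m-i}{k-i}(-1)^{k-i}
= \sum_{i=l}^{k}\sum_{j=l}^{i} b(j)\binom{m-j}{i-j}\binom{m-i}{k-i}(-1)^{k-i},
\end{equation*}
and then swap the order of summation to pull $b(j)$ out, obtaining
\begin{equation*}
\sum_{j=l}^{k} b(j)\sum_{i=j}^{k}\binom{m-j}{i-j}\binom{m-i}{k-i}(-1)^{k-i}.
\end{equation*}

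The key step, and the one I expect to be the main (though mild) obstacle, is recognizing the trinomial-revision identity
\begin{equation*}
\binom{m-j}{i-j}\binom{m-i}{k-i} = \binom{m-j}{k-j}\binom{k-j}{i-j},
\end{equation*}
which one verifies by expanding both sides into factorials (each reduces to $(m-j)!/[(i-j)!(k-i)!(m-k)!]$), or combinatorially by counting ways to pick a $(k-j)$-subset of an $(m-j)$-set together with a distinguished $(i-j)$-sub-subset. Once this identity is applied, the factor $\binom{m-j}{k-j}$ leaves the inner sum, and a change of index $r = i-j$ reduces the rest to
\begin{equation*}
\binom{m-j}{k-j}\sum_{r=0}^{k-j}\binom{k-j}{r}(-1)^{k-j-r} = \binom{m-j}{k-j}(1-1)^{k-j}.
\end{equation*}

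By the binomial theorem this inner quantity is $1$ when $j=k$ and $0$ otherwise, so only the $j=k$ term survives and the double sum equals $b(k)\binom{m-k}{0} = b(k)$, as required. No additional hypotheses beyond $m \ge k \ge l \ge 0$ are needed; the lower limit $l$ plays no active role because terms with $j<l$ never appear. An alternative route, which I would mention only if a cleaner exposition were desired, is to interpret the hypothesis as the matrix identity $\mathbf{a} = B\mathbf{b}$ where $B_{k,i} = \binom{m-i}{k-i}$ is lower-triangular with unit diagonal, and to observe that its inverse is the matrix with entries $(-1)^{k-i}\binom{m-i}{k-i}$; the computation above is exactly the verification that these two lower-triangular matrices are mutually inverse.
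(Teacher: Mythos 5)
The paper states this as a \emph{Fact} and offers no proof at all---it is invoked as a standard inverse-pair (binomial inversion) identity. Your argument is therefore not comparable to anything in the paper, but it is a correct and complete verification: the substitution, the interchange of summation, the trinomial-revision identity $\binom{m-j}{i-j}\binom{m-i}{k-i}=\binom{m-j}{k-j}\binom{k-j}{i-j}$ (which checks out on factorials), and the collapse of the inner sum to $(1-1)^{k-j}$ are all sound, and your remark that the lower limit $l$ is inert is accurate since every index appearing in the substitution stays in the range $[l,k]\subseteq[l,m]$ where the hypothesis applies. The only point worth being explicit about is the convention $0^0=1$ when $j=k$, which you implicitly use and correctly resolve. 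Your closing observation that this is just the statement that the unitriangular matrices $\bigl(\binom{m-i}{k-i}\bigr)$ and $\bigl((-1)^{k-i}\binom{m-i}{k-i}\bigr)$ are mutually inverse is exactly the cleanest way to package the computation, and supplying it fills a (minor, but real) gap in the paper's exposition.
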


\begin{theorem}
\label{cf}
Let $P$ be a polynomial in $m$ variables, and let $1 \le k \le m$. Then
\begin{equation}
g(P, k) = \sum_{i=1}^k (-1)^{i+k} f(P, i) \binom{m-i}{k-i}.
\end{equation}
\end{theorem}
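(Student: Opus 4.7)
The plan is to express $f(P, k)$ as a linear combination of $g(P, 0), \ldots, g(P, k)$ by a direct counting argument, and then invoke the inversion Fact stated above to solve for $g(P, k)$.

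For the counting step, observe that for any $s \in \{0,1\}^m$ and any monomial $c_\alpha x^\alpha$ of $P$, we have $s_j^{\alpha_j} = s_j$ whenever $\alpha_j \geq 1$, so $x^\alpha|_s$ equals $1$ if $s_j = 1$ for every $j$ in the support of $\alpha$, and $0$ otherwise. Hence a monomial whose exponent vector has support of size $i$ contributes its coefficient times the number of tuples in $S_{m,k}$ that take the value $1$ on that support, namely $\binom{m-i}{k-i}$. Grouping monomials by the size $i$ of the support of their exponent vectors, which by definition aggregates into $g(P,i)$, yields
$$f(P,k) = \sum_{i=0}^{k} g(P,i)\binom{m-i}{k-i}.$$
The upper limit is $k$ because $\binom{m-i}{k-i} = 0$ for $i > k$.

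Applying the Fact with $a(i) = f(P,i)$, $b(i) = g(P,i)$, and $l = 0$ then inverts this to
$$g(P,k) = \sum_{i=0}^{k} (-1)^{k-i} f(P,i)\binom{m-i}{k-i},$$
which matches the theorem after rewriting $(-1)^{k-i} = (-1)^{i+k}$ and dropping the $i = 0$ term. That term vanishes whenever $P$ has no constant term, which is the case for each of the permanents $\mathrm{per}(\mathbf{Q}_n)$, $\mathrm{per}(\mathbf{T}_n)$, $\mathrm{per}(\mathbf{S}_n)$, $\mathrm{per}(\mathbf{Z}_n)$ to which the theorem will be applied.

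The only real subtlety will be the handling of non-multilinear monomials in the counting step: since the expansions of the relevant permanents do contain repeated variables, one must be explicit that on $\{0,1\}$-inputs only the \emph{set} of variables appearing in a monomial (not their multiplicities) controls the evaluation, so that the grouping used above aligns precisely with the definition of $g$. Once this point is pinned down, the result is pure bookkeeping plus one application of the stated binomial inversion.
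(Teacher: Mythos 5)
Your proof is correct and follows essentially the same route as the paper's: establish $f(P,k)=\sum_i g(P,i)\binom{m-i}{k-i}$ by counting how often each monomial's support is covered by a tuple in $S_{m,k}$, then apply the stated binomial inversion. The only difference is that you start the sum at $i=0$ and explicitly justify dropping the constant term, whereas the paper takes $l=1$ directly (implicitly assuming $g(P,0)=0$); your version is, if anything, slightly more careful on that point.
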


\begin{proof}
Consider a term in $P$ that is a product of $i$ distinct variables where $i \le k$. It follows from the definition of $f$ 
that the coefficient of this term is counted by $f(P,k)$ a total of $\binom{m-i}{k-i}$ times. Therefore
$$f(P, k) = \sum_{i=1}^{k} g(P, i) \binom{m-i}{k-i}.$$
Then by applying Fact 3.4 with $a(k) = f(P,k)$, $b(k) = g(P,k)$, and $l = 1$, equation (1) follows.
\end{proof}

The following expressions follow directly from Lemma \ref{grel} and Theorem \ref{cf}, and the fact that $\mathrm{per}(\mathbf{Q}_n),$ $\mathrm{per}(\mathbf{T}_n),$ $\mathrm{per}(\mathbf{S}_n),$ and $\mathrm{per}(\mathbf{Z}_n)$ are polynomials in $4n-2$, $2n$, $2n-1$, and $n$ variables, respectively.

\begin{theorem}
Let $S_{m,k}$ be the subset of $\{0,1\}^m$  that consists of the tuples containing $k$ ones, $U_n$ the set of all $n$-by-$n$ $(0,1)$ diagonally constant matrices, and $V_n$ the set of all $n$-by-$n$ $(0,1)$ circulant matrices. Then the following identities hold:
\begin{align*}
Q(n) &= \sum_{i=1}^{2n} (-1)^{i} \binom{4n-i-2}{2n-i} \sum_{s \in S_{4n-2, i}} \mathrm{per}(\mathbf{Q}_n |s), \\
T(n) &= \sum_{i=1}^{2n} (-1)^{i+n} \sum_{s \in S_{2n, i}} \mathrm{per}(\mathbf{T}_n |s), \\
S(n) &= \sum_{\mathbf{M} \in U_n} (-1)^{\gamma(\mathbf{M})+n} \mathrm{per}(\mathbf{M}) \binom{2n - \gamma(\mathbf{M}) -1}{n-\gamma(\mathbf{M})},\\
TS(n) &= \sum_{\mathbf{M} \in V_n} (-1)^{\sigma(\mathbf{M})+n} \mathrm{per}(\mathbf{M}),
\end{align*}
where $\gamma(\mathbf{M})$ is the number of nonzero diagonals in $\mathbf{M}$, and $\sigma(\mathbf{M})$ is the number of ones in the first row of $\mathbf{M}$.
\end{theorem}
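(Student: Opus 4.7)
My plan is to apply Lemma~\ref{grel} and Theorem~\ref{cf} mechanically to each of the four obstruction matrices. By Lemma~\ref{grel}, each of $Q(n)$, $T(n)$, $S(n)$, $TS(n)$ equals $g(\mathrm{per}(\mathbf{A}), k)$ for the appropriate matrix $\mathbf{A}$ and the appropriate target degree $k$; Theorem~\ref{cf} then converts this into a signed sum of the quantities $f(\mathrm{per}(\mathbf{A}), i)$ weighted by binomial coefficients. Because substitution of constants for variables commutes with the permanent, one has $f(\mathrm{per}(\mathbf{A}), i) = \sum_{s \in S_{m,i}} \mathrm{per}(\mathbf{A}|s)$, so each formula in the statement will arise just by choosing the right $(m,k)$ and, in two of the four cases, reindexing the inner sum.

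First I would record the parameters $(m,k)$ for each matrix. $\mathbf{Q}_n$ contains $2n-1$ distinct $x$-variables and $2n-1$ distinct $y$-variables, so $m = 4n-2$, and each term in the permanent expansion is a product of $n$ entries of degree $2$, giving $k = 2n$. For $\mathbf{T}_n$ the indices are reduced modulo $n$, so only $m = 2n$ variables appear, and again $k = 2n$. For $\mathbf{S}_n$ and $\mathbf{Z}_n$ each entry is a single variable, so $k = n$, while $m = 2n-1$ and $m = n$ respectively. Plugging these into equation~(1) and using $\binom{k-i}{k-i}=1$ where applicable immediately yields the stated identities for $Q(n)$ and $T(n)$.

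For $S(n)$ and $TS(n)$ the remaining step is to convert the inner sum over $s \in S_{m,i}$ into a sum over matrices. Since $\mathbf{S}_n$ is diagonally constant and its $2n-1$ variables each correspond to a distinct diagonal, the map $s \mapsto \mathbf{S}_n|s$ is a bijection between $\{0,1\}^{2n-1}$ and $U_n$ that sends a tuple with $i$ ones to a matrix $\mathbf{M}$ with $\gamma(\mathbf{M}) = i$. Similarly, $s \mapsto \mathbf{Z}_n|s$ is a bijection between $\{0,1\}^n$ and $V_n$ sending a tuple with $i$ ones to a matrix $\mathbf{M}$ with $\sigma(\mathbf{M}) = i$. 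Reindexing each inner sum through these bijections and then swapping the order of summation absorbs the sign and binomial factors into the summand, producing the formulas for $S(n)$ and $TS(n)$. The entire argument is bookkeeping, and the only point that deserves a brief sanity check is that the multilinear terms of each permanent really do use exactly $k$ distinct variables --- for the toroidal matrices this uses the observation that $n$ residues modulo $n$ are all distinct if and only if they form a permutation of $\{0,\dots,n-1\}$, which is precisely the condition for a valid toroidal arrangement.
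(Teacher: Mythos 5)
Your proposal matches the paper's own (essentially one-line) proof: the paper obtains these identities directly from Lemma~3.2 and Theorem~3.5 together with the variable counts $4n-2$, $2n$, $2n-1$, and $n$, exactly as you do, and your reindexing of the inner sums over tuples as sums over diagonally constant and circulant matrices is the same observation the paper records in the remark following the theorem. One caveat: carried out carefully, your computation for $T(n)$ yields the sign $(-1)^{i+2n} = (-1)^{i}$ rather than the stated $(-1)^{i+n}$, so the printed identity is off by a sign for odd $n$ (check $n=1$, where it returns $-1$ instead of $T(1)=1$); this is a typo in the paper rather than a defect in your method, but you should not assert that the stated formula follows ``immediately'' without flagging the discrepancy.
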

Note that multiple $(0,1)$ variable assignments to $\mathbf{Q}_n$ and $\mathbf{T}_n$ can correspond to the same $(0,1)$ matrix. As a result, one can think of the formulas for $Q(n)$ and $T(n)$ as summing over multisets of $(0,1)$ matrices. In the cases of $\mathbf{S}_n$ and $\mathbf{Z}_n$, there is a one-to-one relationship between $(0,1)$ variable assignments and $(0,1)$ matrices, so we can write $S(n)$ and $TS(n)$ as sums over sets of $(0,1)$ matrices.

%%%%%%%%%%%%%%%%%%%%%%%%%%%%%%%%%%%%%%%%%%%%%%%%%%%%%%%
\subsection{Complexity Analysis}
The above expressions are impractical to evaluate even for small values of $n$; however, they do provide nontrivial time complexity bounds.

\begin{corollary}
$Q(n)$, $T(n)$, $S(n)$, and $TS(n)$ can be computed in quadratic space and in time $O(n 32^n)$, $O(n 8^n)$, $O(n 8^n),$ and $O(n 4^n)$, respectively. 
\end{corollary}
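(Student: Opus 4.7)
The plan is to reduce the four formulas from the preceding theorem to iterated sums of permanents of $(0,1)$ matrices and to evaluate each permanent with Ryser's formula, which computes the permanent of an $n$-by-$n$ matrix in $O(n \cdot 2^n)$ arithmetic operations and $O(n)$ auxiliary space on top of the $O(n^2)$ needed to store the matrix itself. The rest is a bookkeeping exercise: count the outer summands and multiply by the per-summand cost.

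First I would observe that the expression for $Q(n)$ can be rewritten as a single sum
$$Q(n) = \sum_{s \in \{0,1\}^{4n-2}} (-1)^{|s|} \binom{4n-|s|-2}{2n-|s|}\, \mathrm{per}(\mathbf{Q}_n|s),$$
where $|s|$ denotes the Hamming weight, since the inner sum in the theorem merely ranges over strings of fixed weight $i$ and the binomial coefficient depends only on $i$. The outer iteration enumerates the $2^{4n-2}$ bit-strings using a standard binary counter in $O(n)$ workspace, with $|s|$, the sign, and the binomial coefficient maintained incrementally. For each $s$ we form $\mathbf{Q}_n|s$ in $O(n^2)$ time and invoke Ryser's algorithm, giving total time $2^{4n-2} \cdot O(n \cdot 2^n) = O(n \cdot 32^n)$. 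The identical procedure applied to the formula for $T(n)$ gives $2^{2n}$ outer iterations and total time $O(n \cdot 8^n)$.

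For $S(n)$ and $TS(n)$ I would iterate over $U_n$ and $V_n$ directly. A diagonally constant $(0,1)$ matrix is determined by its $2n-1$ diagonal entries, so $|U_n| = 2^{2n-1}$; a circulant $(0,1)$ matrix is determined by its first row, so $|V_n| = 2^n$. In both cases the current matrix is assembled in $O(n^2)$ time from its generating data, and the quantities $\gamma(\mathbf{M})$ and $\sigma(\mathbf{M})$ are read off in $O(n)$ time. Combined with Ryser, this gives $2^{2n-1} \cdot O(n \cdot 2^n) = O(n \cdot 8^n)$ for $S(n)$ and $2^n \cdot O(n \cdot 2^n) = O(n \cdot 4^n)$ for $TS(n)$. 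Throughout, the only significant memory is the $O(n^2)$ needed for the current matrix together with the $O(n)$ workspace of Ryser's algorithm, so the entire computation runs in quadratic space.

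The argument presents no real obstacle once Ryser's algorithm is in hand; the only point worth spelling out carefully is that the outer enumeration proceeds on the fly, without ever materializing a list of $(0,1)$ matrices, which is what prevents the exponential time bounds from being accompanied by exponential space bounds.
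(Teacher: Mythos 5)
Your proposal is correct and follows essentially the same route as the paper: enumerate the $O(2^{4n})$, $O(2^{2n})$, $O(2^{2n-1})$, and $O(2^n)$ substitutions respectively, and evaluate each resulting permanent with Ryser's formula in $O(n2^n)$ time and polynomial space. The additional remarks about on-the-fly enumeration and incremental maintenance of the weight and sign are sensible elaborations of the same argument rather than a different approach.
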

\begin{proof}
We can compute $Q(n)$ as follows. There are $O(2^{4n})$ $(0,1)$-tuples to enumerate in the summation. For each such tuple $s$, we compute $\mathbf{Q}_n | s$ in $O(n^2)$ time and space, and compute the permanent of this matrix in $O(n2^n)$ time and with $O(n^2)$ space using Ryser's formula \cite{permc}, which states that
$$\mathrm{per}(\mathbf{A}) = \sum_{S\subseteq\{1,\dots,n\}} (-1)^{|S| + n} \prod_{i=1}^n \sum_{j\in S} a_{ij}.$$
%https://www.amazon.com/Permanents-Encyclopedia-Mathematics-its-Applications/dp/0521175143#reader_0521175143
Thus $Q(n)$ can be computed in $O(n32^n)$ time using $O(n^2)$ space. The other bounds are obtained similarly.
\end{proof}

The only other algorithms we know of for $Q(n)$ and $T(n)$ with nontrivial complexity bounds run in time $O(f(n)8^n)$ where $f(n)$ is a low-order polynomial \cite{alg}. However, these algorithms require $O(n^28^n)$ space, whereas we only require $O(n^2)$ space. We do not know of any algorithms with nontrivial complexity bounds for the other two problems.

%%%%%%%%%%%%%%%%%%%%%%%%%%%%%%%%%%%%%%%%%%%%%%%%%%%%%%%
\subsection{Extension: Latin Squares}

A Latin square of order $n$ is an arrangement of $n$ copies of the integers $1,2, \ldots,n$ in an $n$-by-$n$ grid such that every integer appears
exactly once in each row and column. We now show how an expression for $L_n$, the number of Latin squares of order $n$, follows naturally from the method used above.

\begin{lemma}
\label{easy}
Let $\mathbf{B}_n$ be the $n$-by-$n$ obstruction matrix containing the variables $(x_1, x_2, \ldots, x_{n^2})$ defined by $(\mathbf{B}_n)_{i,j} = x_{i + n(j-1)}$. Let $\mathbf{A}_n$ be the $n^2$-by-$n^2$ block diagonal matrix 

$$\mathbf{\mathbf{A}_n} = \begin{bmatrix} 
\mathbf{B}_n & 0 & \cdots & 0 \\ 0 & \mathbf{B}_{n} & \cdots & 0 \\
\vdots & \vdots & \ddots & \vdots \\
0 & 0 & \cdots & \mathbf{B}_{n}
\end{bmatrix}.$$
Then $L_n = g(\mathrm{per}(\mathbf{A}_n), n^2)$.
\end{lemma}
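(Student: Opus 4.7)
The plan is to exploit the block-diagonal structure of $\mathbf{A}_n$ together with the definition of $\mathbf{B}_n$ to turn the statement into a counting bijection between multilinear monomials in $\mathrm{per}(\mathbf{A}_n)$ and Latin squares of order $n$.

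First I would use the fact that the permanent of a block-diagonal matrix factors as the product of the permanents of the blocks, so
\begin{equation*}
\mathrm{per}(\mathbf{A}_n) = \mathrm{per}(\mathbf{B}_n)^n = \Bigl(\sum_{\sigma \in S_n} \prod_{i=1}^{n} x_{i+n(\sigma(i)-1)}\Bigr)^{n}.
\end{equation*}
Expanding the $n$-th power gives a sum indexed by tuples $(\sigma_1,\dots,\sigma_n) \in S_n^n$, in which the tuple contributes the monomial $\prod_{k=1}^{n}\prod_{i=1}^{n} x_{i+n(\sigma_k(i)-1)}$ of total degree $n^2$.

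Next I would characterize exactly which of these monomials are square-free. There are $n^2$ variables and each monomial has degree $n^2$, so the monomial indexed by $(\sigma_1,\dots,\sigma_n)$ is a product of $n^2$ distinct variables if and only if every variable $x_{i+n(j-1)}$ occurs in it at most once, equivalently, for every ordered pair $(i,j) \in [n]\times[n]$ there is exactly one $k$ with $\sigma_k(i)=j$. Since $g(\mathrm{per}(\mathbf{A}_n), n^2)$ simply counts the number of such square-free monomials (each with coefficient $1$, being an integer sum of degree-$n^2$ products of distinct variables), the claim reduces to showing that the number of such tuples equals $L_n$.

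Finally I would produce the bijection with Latin squares. Given a tuple $(\sigma_1,\dots,\sigma_n)$ satisfying the square-freeness condition above, define $\mathbf{L}$ by setting $\mathbf{L}_{i,j}=k$ where $k$ is the unique index with $\sigma_k(i)=j$. Fixing the row $i$, the values $\sigma_1(i),\dots,\sigma_n(i)$ are a permutation of $[n]$, so row $i$ of $\mathbf{L}$ contains each symbol once; fixing the column $j$, each $\sigma_k$ is a permutation and therefore hits $j$ exactly once, so column $j$ of $\mathbf{L}$ contains each symbol once. Conversely, from a Latin square $\mathbf{L}$ we recover permutations $\sigma_k$ by $\sigma_k(i) = $ the column in row $i$ containing the symbol $k$, and the square-freeness condition is exactly the Latin-square property. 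The inverse is clearly well-defined, giving the required bijection.

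There is no real obstacle here; the only thing that requires care is matching up the indexing convention $(\mathbf{B}_n)_{i,j}=x_{i+n(j-1)}$ with the bijection so that the rows, columns, and symbols of the Latin square are assigned to the correct roles (row index $i$, column index $\sigma_k(i)$, symbol $k$). Once that is set up consistently, the square-free monomials in $\mathrm{per}(\mathbf{A}_n)$ and Latin squares of order $n$ are in an immediate one-to-one correspondence, yielding $L_n = g(\mathrm{per}(\mathbf{A}_n), n^2)$.
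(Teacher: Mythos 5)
Your proposal is correct and follows essentially the same route as the paper: the paper identifies each term of $\mathrm{per}(\mathbf{A}_n)$ with an ordered tuple of $n$ permutation matrices (one per block of $\mathbf{B}_n$), observes that squarefreeness forces them to be disjoint, and identifies such tuples with Latin squares---exactly the bijection you spell out via the factorization $\mathrm{per}(\mathbf{A}_n)=\mathrm{per}(\mathbf{B}_n)^n$. Your version just makes the correspondence and the row/column checks explicit where the paper leaves them implicit.
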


\begin{proof}
A Latin square of order $n$ can be thought of as an ordered set of $n$ disjoint permutation matrices of order $n$. On the other hand, a term in $\mathrm{per}(\mathbf{A}_n)$ can be thought of as an ordered set of $n$ permutation matrices of order $n$, one along each copy of $\mathbf{B}_n$. If this term contains $n^2$ distinct variables, these permutation matrices must be disjoint. Therefore the sum of the coefficients of the terms in $\mathrm{per}(\mathbf{A}_n)$ containing $n^2$ distinct variables is exactly $L_n$.
\end{proof}

\begin{theorem}
Let $L_n$ be the number of Latin squares of order $n$. Then 
$$L_n = \sum_{\mathbf{M} \in M_n} (-1)^{\sigma (\mathbf{M})+ n} \mathrm{per} (\mathbf{M})^n$$
where $M_n$ is the set of all $(0,1)$ $n$-by-$n$ matrices, and $\sigma (\mathbf{M})$ is the number of nonzero entries in $\mathbf{M}$.
\end{theorem}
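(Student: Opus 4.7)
The plan is to combine Lemma \ref{easy} with Theorem \ref{cf}, using the special feature that $\mathbf{A}_n$ has exactly as many variables as the multilinear degree we care about. By Lemma \ref{easy}, it suffices to compute $g(\mathrm{per}(\mathbf{A}_n), n^2)$. The matrix $\mathbf{A}_n$ contains $m = n^2$ distinct variables, so applying Theorem \ref{cf} with $m = k = n^2$ gives
\begin{equation*}
L_n = g(\mathrm{per}(\mathbf{A}_n), n^2) = \sum_{i=1}^{n^2} (-1)^{i+n^2} f(\mathrm{per}(\mathbf{A}_n), i) \binom{n^2-i}{n^2-i},
\end{equation*}
and every binomial coefficient collapses to $1$.

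Next I would translate $f(\mathrm{per}(\mathbf{A}_n), i)$ into a sum over $(0,1)$ matrices. Since the entries of $\mathbf{B}_n$ are the distinct variables $x_1, \ldots, x_{n^2}$ placed in a fixed bijection with the $n^2$ cells of an $n$-by-$n$ grid, every $(0,1)$-assignment $s$ of these variables corresponds to a unique $\mathbf{M} \in M_n$, with $\sigma(\mathbf{M})$ equal to the number of ones in $s$. Substituting $s$ into $\mathbf{A}_n$ yields the block-diagonal matrix with $n$ copies of $\mathbf{M}$ on the diagonal, whose permanent factors as $\mathrm{per}(\mathbf{M})^n$. Hence
\begin{equation*}
f(\mathrm{per}(\mathbf{A}_n), i) = \sum_{\substack{\mathbf{M} \in M_n \\ \sigma(\mathbf{M}) = i}} \mathrm{per}(\mathbf{M})^n.
\end{equation*}

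Finally I would regroup the outer sum by $\mathbf{M}$ rather than by $i$, absorbing $(-1)^{i+n^2}$ as $(-1)^{\sigma(\mathbf{M})+n^2}$, and then use the elementary observation $n^2 \equiv n \pmod{2}$ to replace $(-1)^{n^2}$ by $(-1)^n$, yielding the claimed expression. There is no real obstacle here: the only substantive content is the block-diagonal identity $\mathrm{per}(\mathbf{A}_n|s) = \mathrm{per}(\mathbf{M})^n$ and the parity identity $(-1)^{n^2} = (-1)^n$, both of which are immediate. The whole argument is essentially a bookkeeping exercise that parallels the derivation of the formula for $TS(n)$ in the preceding theorem, specialized to the block-diagonal setting of Lemma \ref{easy}.
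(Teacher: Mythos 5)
Your proposal is correct and follows essentially the same route as the paper: apply Lemma \ref{easy} and Theorem \ref{cf} with $m=k=n^2$ so the binomial coefficients collapse, use $\mathrm{per}(\mathbf{A}_n|s)=\mathrm{per}(\mathbf{B}_n|s)^n$ together with the bijection between $(0,1)$-assignments and matrices in $M_n$, and finish with the parity identity $(-1)^{n^2}=(-1)^n$. The only (harmless) loose end, present in the paper's proof as well, is that the sum over $i$ starts at $1$ while the final sum over $M_n$ includes the zero matrix, whose contribution vanishes since its permanent is $0$.
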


\begin{proof}
From Lemma 3.8 and Theorem 3.5, it follows that 
\begin{align*}
L_n &= \sum_{i=1}^{n^2} (-1)^{i+n^2} f(\mathrm{per}(\mathbf{A}_n), i) \\
&= \sum_{i=1}^{n^2} (-1)^{i+n} \sum_{s \in S_{n^2,i}} \mathrm{per} (\mathbf{A}_n |s)\\
&= \sum_{i=1}^{n^2} (-1)^{i+n} \sum_{s \in S_{n^2,i}} \mathrm{per} (\mathbf{B}_n |s)^n,\\
\intertext{where the last step follows from the fact that $\mathrm{per}(\mathbf{A}_n) = \mathrm{per}(\mathbf{B}_n)^n$. Because $\mathbf{B}_n | u \neq \mathbf{B}_n | v$ if $u \neq v$, we can rewrite this as} 
L_n &= \sum_{\mathbf{M} \in M_n} (-1)^{\sigma (\mathbf{M})+ n} \mathrm{per} (\mathbf{M})^n. \tag*{\qedhere}
\end{align*}
\end{proof}
This formula was first given in \cite{latin}.

\section{Lower Bounds}
In the last section, we showed that sums of coefficients in the permanents of the obstruction matrices $\mathbf{Q}_n,\mathbf{T}_n,\mathbf{S}_n,$ and $\mathbf{Z}_n$ correspond to the values of $Q(n),T(n),S(n)$, and $TS(n)$, respectively. We then gave a closed-form expression for the function $g$ that computes these sums. More precisely, $g(P,k)$ was the sum of the coefficients of the terms in the polynomial $P$ containing $k$ distinct variables.

Now since each entry in $\mathbf{Q}_n$ is a monomial with coefficient $1$, the coefficient of a term in $\det(\mathbf{Q}_n)$ is at most the coefficient of the corresponding term in $\mathrm{per}(\mathbf{Q}_n)$. Therefore $|g(\det(\mathbf{Q}_n),2n)| \le g(\mathrm{per}(\mathbf{Q}_n),2n) = Q(n)$. The same argument applies to the other problems. As a result we have the following corollary:

\begin{corollary}
\label{lowerbounds}
Let $S_{m,k}$ be the subset of $\{0,1\}^m$ that consists of the tuples containing $k$ ones, $U_n$ the set of all $n$-by-$n$ $(0,1)$ diagonally constant matrices, and $V_n$ the set of all $n$-by-$n$ $(0,1)$ circulant matrices. Then the following inequalities hold:
\begin{align*}
Q_{\det}(n) & := \bigg | \sum_{i=1}^{2n} (-1)^{i} \binom{4n-i-2}{2n-i}\sum_{s \in S_{4n-2, i}} \det(\mathbf{Q}_n |s) \bigg | \le Q(n),\\
T_{\det}(n) & := \bigg | \sum_{i=1}^{2n} (-1)^{i+n} \sum_{s \in S_{2n, i}} \det(\mathbf{T}_n |s) \bigg | \le T(n), \\
S_{\det}(n) & := \bigg |\sum_{\mathbf{M} \in U_n} (-1)^{\gamma(\mathbf{M})} \det(\mathbf{M}) \binom{2n - \gamma(\mathbf{M}) -1}{n-\gamma(\mathbf{M})} \bigg | \le S(n),\\
TS_{\det}(n) & := \bigg |\sum_{\mathbf{M} \in V_n} (-1)^{\sigma(\mathbf{M})} \det(\mathbf{M}) \bigg | \le TS(n),
\end{align*}
where $\gamma(\mathbf{M})$ is the number of nonzero diagonals in $\mathbf{M}$, and $\sigma(\mathbf{M})$ is the number of ones in the first row of $\mathbf{M}$.
\end{corollary}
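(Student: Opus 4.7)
The plan is to exploit the tight structural connection between the determinant and the permanent of an obstruction matrix. Every entry of each of $\mathbf{Q}_n, \mathbf{T}_n, \mathbf{S}_n, \mathbf{Z}_n$ is a multilinear monomial with coefficient $+1$, so if $\mathbf{A}$ denotes any of these then $\det(\mathbf{A})$ and $\mathrm{per}(\mathbf{A})$ are sums of exactly the same $n!$ monomials up to signs $\pm 1$. Collecting like terms, the coefficient of any monomial $m$ in $\det(\mathbf{A})$ therefore satisfies $|c_{\det}(m)| \le c_{\mathrm{per}}(m)$, since $c_{\mathrm{per}}(m)$ merely counts the permutations $\sigma$ with $\prod_i a_{i,\sigma(i)} = m$ while $c_{\det}(m)$ is a signed count of the same set. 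Applying the triangle inequality to the sum over monomials that are products of $k$ distinct variables yields
\[
|g(\det(\mathbf{A}),k)| \le g(\mathrm{per}(\mathbf{A}),k),
\]
and Lemma \ref{grel} identifies the right-hand side as $Q(n), T(n), S(n), TS(n)$ for the four choices $(\mathbf{A},k) \in \{(\mathbf{Q}_n,2n),(\mathbf{T}_n,2n),(\mathbf{S}_n,n),(\mathbf{Z}_n,n)\}$.

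To turn the left-hand side into the stated closed form, I would apply Theorem \ref{cf} with $P = \det(\mathbf{A})$; the theorem is a purely algebraic identity about polynomials, so it makes no use of $P$ being a permanent and applies verbatim. This gives
\[
g(\det(\mathbf{A}),k) = \sum_{i=1}^{k}(-1)^{i+k} \binom{m-i}{k-i}\sum_{s\in S_{m,i}}\det(\mathbf{A}|s),
\]
where $m$ is the number of variables in $\mathbf{A}$. Substituting the four obstruction matrices, each with the appropriate $m \in \{4n-2,\,2n,\,2n-1,\,n\}$, reproduces the inner expressions of the corollary. For $\mathbf{S}_n$ and $\mathbf{Z}_n$ the last step is to reindex the sum over $\{0,1\}$-tuples as a sum over $U_n$ and $V_n$ via the bijection between assignments and diagonally constant or circulant matrices, with $\gamma(\mathbf{M})$ and $\sigma(\mathbf{M})$ recording the weight $i$ of the corresponding $s$.

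There is no real obstacle to the argument, since the paragraph immediately preceding the corollary already contains the decisive inequality. The one thing to be careful about is that $|c_{\det}(m)| \le c_{\mathrm{per}}(m)$ must be established \emph{before} invoking the signed cancellations inside $\det(\mathbf{A})$; once that is in hand, the absolute value in the corollary forgives any global sign such as $(-1)^n$, which explains why the signs $(-1)^{\gamma(\mathbf{M})}$ and $(-1)^{\sigma(\mathbf{M})}$ in the corollary differ by a constant factor from the corresponding signs in the closed-form formulas for $S(n)$ and $TS(n)$.
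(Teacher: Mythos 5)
Your argument is exactly the paper's: the coefficientwise bound $|c_{\det}(m)| \le c_{\mathrm{per}}(m)$ (valid because every entry of the obstruction matrices is a monomial with coefficient $1$) gives $|g(\det(\mathbf{A}),k)| \le g(\mathrm{per}(\mathbf{A}),k)$, and the closed forms come from applying Theorem \ref{cf} to $\det(\mathbf{A})$ just as was done for the permanent, with the absolute value absorbing the constant sign $(-1)^n$. This matches the paragraph preceding the corollary in the paper, merely spelled out in more detail.
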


We now show that $TS_{\det}(n)$ is the permanent of Schur's matrix of order $n$; see \cite[Sequence A003112]{oeis}.

Let $\mathbf{M}_n = (\epsilon^{jk})$ be an $n$-by-$n$ matrix where $\epsilon$ is an $n$th root of unity, and let $P_n = \mathrm{per}(\mathbf{M}_n)$. The matrix $\mathbf{M}_n$ is known as Schur's matrix of order $n$. It has been of interest in number theory, statistics, and coding theory. Its permanent is the topic of \cite{schur}.

\begin{theorem}
For all $n$, $|P_n| \le TS(n) \le S(n)$.
\end{theorem}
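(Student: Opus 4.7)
\medskip

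\noindent\textbf{Proof proposal.} The second inequality $TS(n) \le S(n)$ was already noted in the Preliminaries, so the entire content is the bound $|P_n| \le TS(n)$. By Corollary \ref{lowerbounds}, it suffices to show that $|P_n| = TS_{\det}(n)$, i.e., that the permanent of Schur's matrix coincides (up to sign) with the signed sum of determinants of $(0,1)$ circulant matrices appearing in the definition of $TS_{\det}(n)$.

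The plan is to apply Ryser's formula to $\mathrm{per}(\mathbf{M}_n)$ and reinterpret the resulting inner product as the determinant of a $(0,1)$ circulant. Concretely, indexing rows and columns of $\mathbf{M}_n = (\epsilon^{jk})$ by $\{0,1,\ldots,n-1\}$, Ryser's formula gives
\begin{equation*}
P_n = \sum_{T \subseteq \{0,\ldots,n-1\}} (-1)^{|T|+n} \prod_{j=0}^{n-1} \sum_{k \in T} \epsilon^{jk}.
\end{equation*}
For each subset $T$, let $\mathbf{M}_T \in V_n$ be the $(0,1)$ circulant whose first row is the indicator vector of $T$. Then $\mathbf{M}_T = \sum_{k \in T} C^k$, where $C$ is the cyclic shift matrix whose eigenvalues are the $n$th roots of unity $\epsilon^{0}, \epsilon^{1}, \ldots, \epsilon^{n-1}$. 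Hence the eigenvalues of $\mathbf{M}_T$ are $\lambda_j = \sum_{k \in T} \epsilon^{jk}$, and
\begin{equation*}
\det(\mathbf{M}_T) = \prod_{j=0}^{n-1} \lambda_j = \prod_{j=0}^{n-1} \sum_{k \in T} \epsilon^{jk}.
\end{equation*}
Since $T \mapsto \mathbf{M}_T$ is a bijection between subsets of $\{0,\ldots,n-1\}$ and $V_n$, and since $\sigma(\mathbf{M}_T) = |T|$, substituting yields
\begin{equation*}
P_n = (-1)^n \sum_{\mathbf{M} \in V_n} (-1)^{\sigma(\mathbf{M})} \det(\mathbf{M}).
\end{equation*}
Taking absolute values, $|P_n| = TS_{\det}(n)$, and Corollary \ref{lowerbounds} closes the argument.

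I expect the only substantive step to be the identification of the Ryser inner product $\prod_j \sum_{k \in T} \epsilon^{jk}$ with $\det(\mathbf{M}_T)$ via the simultaneous diagonalization of circulants by the Fourier basis; this is a standard fact about circulant matrices but is the linchpin that connects the analytic object $P_n$ to the combinatorial lower bound $TS_{\det}(n)$. Everything else is bookkeeping: checking that the signs $(-1)^{|T|+n}$ from Ryser match $(-1)^{\sigma(\mathbf{M})+n}$ in Corollary \ref{lowerbounds}, and invoking the already-noted $TS(n) \le S(n)$.
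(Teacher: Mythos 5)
Your proof is correct, but it takes a genuinely different route from the paper. The paper's proof is essentially a citation: it invokes the identity $P_n = g(\det(\mathbf{Z}_n),n)$ from Graham and Lehmer \cite{schur} and then observes that Theorem \ref{cf} applied to $\det(\mathbf{Z}_n)$ (an $n$-variable polynomial, so all the binomial coefficients collapse to $1$) turns $|g(\det(\mathbf{Z}_n),n)|$ into exactly $TS_{\det}(n)$. You instead bypass the external identity entirely: you expand $\mathrm{per}(\mathbf{M}_n)$ by Ryser's formula and recognize each Ryser product $\prod_j \sum_{k\in T}\epsilon^{jk}$ as $\det(\mathbf{M}_T)$ via the Fourier diagonalization of circulants, using the bijection $T\mapsto \mathbf{M}_T$ between subsets and $V_n$ with $\sigma(\mathbf{M}_T)=|T|$. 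The bookkeeping all checks out (the empty set contributes zero on both sides, and the signs $(-1)^{|T|+n}$ match those in $TS_{\det}(n)$ up to the overall factor $(-1)^n$ absorbed by the absolute value). What your argument buys is self-containedness and a conceptual explanation of \emph{why} $TS_{\det}(n)=|P_n|$ --- the same Ryser formula the paper already uses for its complexity analysis does the work, with the discrete Fourier transform converting the analytic object $P_n$ into the combinatorial determinant sum; what the paper's version buys is brevity and an explicit pointer to the literature where the identity originates. One cosmetic caveat: your computation needs $\epsilon$ to be a \emph{primitive} $n$th root of unity so that $\{\epsilon^j\}_{j=0}^{n-1}$ is the full spectrum of the shift matrix $C$; the paper's definition of $\mathbf{M}_n$ omits the word ``primitive,'' but that is clearly intended, so this is not a gap in your argument.
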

\begin{proof}
From Corollary \ref{lowerbounds}, it suffices to show that $TS_{\det}(n) = |P_n|$. This follows immediately from the fact that $P_n = g(\det(\mathbf{Z}_n),n)$ \cite{schur}.
\end{proof}

%%%%%%%%%%%%%%%%%%%%%%%%%%%%%%%%%%%%%%%%%%%%%%%%%%%%%%%
% \bibliographystyle{plain} 
% \bibliography{myBibFile} 
% If you use BibTeX to create a bibliography
% then copy and past the contents of your .bbl file into your .tex file

\end{document}